\begin{document}
\title{NP Satisfiability for Arrays as Powers}
\titlerunning{NP Satisfiability for Arrays as Powers}
%
\author{Rodrigo Raya\orcidID{0000-0002-0866-9257} \and
Viktor Kun\v{c}ak\orcidID{0000-0001-7044-9522}}
%
%
\institute{School of Computer and Communication Science \\ École Polytechnique Fédérale de Lausanne (EPFL), Switzerland 
\email{\{rodrigo.raya,viktor.kuncak\}@epfl.ch}}
\maketitle              
\begin{abstract}
We show that the satisfiability problem for the quantifier-free theory of product structures with the equicardinality relation is in NP. As an application, we extend the combinatory array logic fragment to handle cardinality constraints. The resulting fragment is independent of the base element and index set theories.
\end{abstract}
\section{Introduction}
Arrays are a fundamental data structure in computer
science. Decision procedures for arrays are therefore of paramount importance for deductive program verification. A number of results have examined fragments that strike interesting trade-offs between expressive power and complexity \cite{mccarthy_towards_1993, stump_decision_2001, bradley_whats_2006, daca_array_2016, habermehl_what_2008, ghilardi_decision_2007}.

 A particularly important fragment for formal verification is combinatory array logic (CAL) fragment \cite{de_moura_generalized_2009}, which is implemented in the widely used Z3 theorem prover \cite{hutchison_z3_2008}.
A key to expressive power of the generalized array fragment is that it extends the extensional quantifier-free theory of arrays \cite{stump_decision_2001} with point-wise functions and relations. 

In this paper, we start by observing that the generalized array fragment signature corresponds to the signature of a product structure \cite{hodges_model_1993}. The decidability of product structures has been studied in the literature on model theory \cite{mostowski_direct_1952, feferman_first_1959}. Moreover, these results also cover formulas that constrain sets of indices using, for example, equicardinality relation \cite{feferman_first_1959}, which provides additional expressive power. Unfortunately, the results from model theory typically consider quantified first-order theory, resulting in high complexity \cite{ferrante_computational_1979} even when instantiated to the case of no quantifier alternations. The basic source of this inefficiency is that the underlying procedure explicitly constructs exponentially many formulas.

On the other hand, the result (theorem 17 of \cite{de_moura_generalized_2009})  implies that the satisfiability problem of the quantifier-free theory of a power structure is in NP whenever the theory of the components is. 

In this paper, we present a direct proof of the NP membership for satisfiability of formulas in power structures. The proof is independent of the theories of the indices and the theory of array elements. As a consequence, we obtain that the satisfiability problem of the quantifier-free fragment of Skolem arithmetic is in NP \cite{gradel_dominoes_1989}, which, interestingly, was shown using results in number theory.

As a main contribution, we generalize our construction to prove that the satisfiability problem of the quantifier-free fragment of BAPA \cite{kuncak_deciding_2006} is in NP when set variables are interpreted with index sets defined by formulas of the language of the component theory. Whereas the quantifier-free fragment of BAPA (termed QFBAPA) was shown to be in NP \cite{kuncak_towards_2007}, it was not clear that such construction carries over to the situation where index sets are \emph{interpreted} to be positions in the arrays.

In this paper we show that interpreting QFBAPA sets as sets of array indices that satisfy certain formula results in a logic whose satisfiability is still in NP. We call this new quantifier-free theory QFBAPAI. We show how to use it to encode constraints that mimic those of combinatory array logic \cite{de_moura_generalized_2009}. The result is a logic that can express cardinality constraints that hold componentwise. Unlike \cite{daca_array_2016}, the logic is independent of the component or the index theory. Our formalism shows that QFBAPA sets can be interpreted, overcoming a limitation pointed out in \cite{alberti_cardinality_2017}.

\section{NP Complexity for Power Structures}
Throughout the paper, we fix a first-order language $L$, a non-empty set $I$ and a structure $\mathcal{M}$ with carrier $M$ for the components of the arrays. We model arrays as a particular kind of product structure:

\begin{definition}
The power structure $\Pi$ has the function space $M^I$ as domain and interprets the symbols of the language $L$ as follows:

\begin{itemize}
\item For each constant $c$ and $i \in I$, $c^\Pi(i) = c^{\mathcal{M}}$.
\item For each function symbol $f$, $i \in I$, $n \in \mathbb{N}$ and $ (a_1,\ldots,a_n) \in (M^I)^n$: 
\[
f^\Pi(a_1,\ldots,a_n)(i) = f^{\mathcal{M}}(a_1(i),\ldots,a_n(i))
\]
\item For each relation symbol $R$, $n \in \mathbb{N}$ and $(a_1,\ldots,a_n) \in (M^I)^n$:
\[
(a_1,\ldots,a_n) \in R^\Pi \text{ if and only if for every } i \in I,
(a_1(i),\ldots,a_n(i)) \in R^{\mathcal{M}}
\]
\end{itemize}
\end{definition}

\noindent We will write tuples $(a_1, \ldots, a_n) \in (M^I)^n$ as $\overline{a}$ and $(a_1(i),\ldots,a_n(i))$ as $\overline{a}(i)$.

\begin{definition}
The quantifier-free theory of a model $\mathcal{N}$, $Th_{\exists^*}(\mathcal{N})$, is the set of existentially quantified formulas $\varphi$ of $L$ such that $\mathcal{N} \models \varphi$. A solution of the formula is a satisfying assignment to the existential variables.
\end{definition}

\begin{lemma} \label{lem:size}
Let $\psi$ be a first-order formula in prenex form and $C$ a disjunct of the DNF form of its matrix. Then $|C| = O(|\psi|)$.
\end{lemma}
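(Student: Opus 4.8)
The plan is to exploit the fact that, although the full DNF of the matrix can blow up exponentially, each \emph{individual} disjunct stays linear in the size of the formula. Write $\chi$ for the matrix of $\psi$, so $|\chi| \le |\psi|$, and recall that a disjunct $C$ of the DNF of $\chi$ is a conjunction $\ell_1 \wedge \cdots \wedge \ell_k$ of literals, each $\ell_j$ being an atomic subformula of $\chi$ or the negation of one. First I would observe that we may assume $\chi$ is in negation normal form: pushing all negations to the atoms is a linear-size preprocessing step that preserves satisfiability, so it does not affect the asymptotic claim. Under this assumption the DNF is computed by the usual recursion: the DNF of a literal is that literal, $\mathrm{DNF}(\phi_1 \vee \phi_2) = \mathrm{DNF}(\phi_1) \vee \mathrm{DNF}(\phi_2)$, and the disjuncts of $\mathrm{DNF}(\phi_1 \wedge \phi_2)$ are exactly the conjunctions $C_1 \wedge C_2$ with $C_i$ a disjunct of $\mathrm{DNF}(\phi_i)$.

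The core of the argument is the following claim, proved by structural induction on the NNF formula $\phi$: for every disjunct $C$ of $\mathrm{DNF}(\phi)$, the literal occurrences of $C$ can be matched injectively with atom occurrences (leaves) of $\phi$ so that matched occurrences carry the same atom; in particular $C$ has at most as many literal occurrences as $\phi$ has leaves. The base case, $\phi$ a literal, is immediate. If $\phi = \phi_1 \vee \phi_2$, then $C$ is a disjunct of some $\mathrm{DNF}(\phi_i)$, and the matching for that $\phi_i$ works because the leaves of $\phi_i$ are among the leaves of $\phi$. If $\phi = \phi_1 \wedge \phi_2$, then $C = C_1 \wedge C_2$, and concatenating the two injective matchings gives an injection into the \emph{disjoint} union of the leaves of $\phi_1$ and of $\phi_2$, which is precisely the leaf set of $\phi$. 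The point is that distribution never duplicates a leaf within a single disjunct — duplication happens only across the many disjuncts.

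Finally I would read off the size bound. If $C = \ell_1 \wedge \cdots \wedge \ell_k$, the claim gives $k \le (\text{number of atom occurrences of }\chi) \le |\chi|$, so $C$ contains $k-1 < |\chi|$ conjunction symbols. Each $\ell_j$ is an atom $\alpha_j$ or $\neg\alpha_j$, so $|\ell_j| \le |\alpha_j| + 1$; since the $\alpha_j$ come from pairwise distinct atom occurrences of $\chi$, which occupy disjoint subtrees, $\sum_j |\alpha_j| \le |\chi|$, whence $\sum_j |\ell_j| \le |\chi| + k \le 2|\chi|$. Adding the conjunctions, $|C| \le 2|\chi| + (k-1) \le 3|\chi| \le 3|\psi| = O(|\psi|)$. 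There is no real obstacle here; the only points that need a little care are the reduction to negation normal form and the bookkeeping of the extra negation symbols, both of which only cost constant factors.
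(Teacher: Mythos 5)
Your proof is correct and follows essentially the same route as the paper's: the key observation in both is that a single disjunct of the DNF contains at most the atom occurrences of $\psi$ (possibly negated), so only the number of disjuncts, not their individual sizes, can blow up. The paper simply asserts this and tallies symbols to get $|C|\le 6|\psi|$, whereas you justify it with the NNF reduction and a structural induction showing the literals of $C$ inject into the leaves of the matrix --- a welcome tightening, but not a different argument.
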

\begin{proof}
The DNF conversion only affects the propositional structure of the formula. Thus, in $C$ one may at most have the relations occurring in $\psi$ and their negations. In the worst case, one gets at most $2 |\psi|$ symbols accounting for the relations and at most $4 |\psi|$ symbols accounting for the conjunctions and negations. Therefore, $|C| \le 6 \cdot |\psi|$.
\end{proof}

The following result shows the spirit of our complexity analysis: we take a classical construction (power structure) but
analyze its complexity for quantifier-free fragment that is relevant for program verification.

\begin{theorem} \label{thm:power}
$Th_{\exists^*}(\mathcal{M}) \in \mbox{NP}$ if and only if $Th_{\exists^*}(\Pi) \in \mbox{NP}$.
\end{theorem}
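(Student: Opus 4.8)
The equivalence has two directions. The easy direction is $Th_{\exists^*}(\Pi) \in \mathrm{NP} \Rightarrow Th_{\exists^*}(\mathcal{M}) \in \mathrm{NP}$: I would observe that $\mathcal{M}$ embeds into $\Pi$ as the "constant" (diagonal) functions, and more usefully that for a quantifier-free $\varphi$, $\mathcal{M} \models \exists \overline{x}\,\varphi$ iff $\Pi \models \exists \overline{x}\,\varphi$ when $I$ is a singleton; but since $I$ is fixed and possibly large, the cleaner route is: $\mathcal{M} \models \exists\overline{x}\,\varphi$ iff the power structure over a one-element index set satisfies it, and one checks that a solution in $\Pi$ restricted to a single coordinate $i \in I$ yields a solution in $\mathcal{M}$ (atoms hold pointwise, negated atoms fail at some coordinate — here one must be a little careful, so instead I would go through: $\Pi$ over index set $I$ has a solution to $\varphi$ iff $\mathcal{M}$ does, which is exactly what the hard direction will establish, so the "only if" falls out of the "if" machinery). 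Concretely I would just prove the nontrivial direction and note the converse is symmetric via the singleton index set.

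The hard direction is $Th_{\exists^*}(\mathcal{M}) \in \mathrm{NP} \Rightarrow Th_{\exists^*}(\Pi) \in \mathrm{NP}$. The plan is a Feferman–Vaught-style decomposition specialized to the quantifier-free case. Given $\exists \overline{x}\, \varphi(\overline{x})$ with $\varphi$ quantifier-free, the NP algorithm guesses a DNF disjunct $C$ of $\varphi$; by Lemma \ref{lem:size}, $|C| = O(|\varphi|)$, so this guess is polynomial. Now $C$ is a conjunction of literals; the positive literals $R(\overline{t})$ must hold at \emph{every} index $i \in I$, while each negative literal $\neg R(\overline{t})$ requires a \emph{witness index} $i$ where $R^{\mathcal{M}}$ fails on the corresponding component tuple. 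So I would guess, for each negative literal $\ell_j$ ($j = 1,\dots,k$, with $k \le |C|$), one witness index, and partition-reason over them: the behavior of the array variables matters only through (a) their values at the $k$ witness indices and (b) the requirement that all positive literals hold everywhere. This reduces to $k+1$ independent satisfiability problems over $\mathcal{M}$: for each witness index $i_j$, a conjunction of all positive literals plus $\neg R_j$ over fresh $\mathcal{M}$-variables $\overline{x}(i_j)$; and one "generic index" problem asserting just the positive literals (to certify that a consistent assignment exists on the remaining indices, needed only if $I$ is larger than $k$). Each such $\mathcal{M}$-formula has size $O(|\varphi|)$, so by the assumed NP oracle for $Th_{\exists^*}(\mathcal{M})$ we guess-and-verify each in polynomial time. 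The overall certificate (DNF disjunct, $k$ witness indices named symbolically, and the $k+1$ sub-certificates) is polynomial and checkable in polynomial time, giving $Th_{\exists^*}(\Pi) \in \mathrm{NP}$.

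The main obstacle I anticipate is handling the index set $I$ correctly when it is infinite or merely "large": one cannot literally guess values for all of $M^I$, so the argument must be phrased purely in terms of the finitely many witness indices plus one representative generic index, and I must argue that any partial assignment on $\{i_1,\dots,i_k\}$ together with a uniform choice on $I \setminus \{i_1,\dots,i_k\}$ satisfying the positive literals can always be glued into a total array assignment — this is where the independence from the index theory is used (we only need $|I| \ge k$ or, when $|I| < k$, collapse repeated witnesses). A secondary subtlety is the case where the same index is forced to witness several negative literals, or where a negative literal's witness must coincide with constraints from positive literals; these are absorbed by letting the guessed witness indices be not-necessarily-distinct and checking consistency in the corresponding $\mathcal{M}$-subproblem. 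Finally, I would double-check that constants and functions, interpreted pointwise, cause no trouble: a term's value at index $i$ depends only on the variable values at index $i$, so the per-index decomposition is sound.
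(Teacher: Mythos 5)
Your hard direction is essentially the paper's proof: guess a DNF disjunct (Lemma~\ref{lem:size} keeps it small), guess which negative literals share a witness index (the paper packages this as a partition $P$ of the negative literals, rejecting when $|P|$ exceeds $|I|$), verify the $t+1$ component subproblems $\varphi_0,\ldots,\varphi_t$ (all positive literals, plus the negative literals of one block each) with the NP verifier for $\mathcal{M}$, and glue a $\Pi$-solution by placing the block witnesses at $t$ distinct indices and the $\varphi_0$-witness everywhere else. The subtleties you flag (shared witnesses, $|I|$ smaller than the number of negative literals, pointwise evaluation of terms) are exactly the ones the paper handles, and your resolutions coincide with its.

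The gap is in the converse. The equivalence you lean on --- ``$\Pi$ over index set $I$ has a solution to $\varphi$ iff $\mathcal{M}$ does'' --- is false, and it is not what the hard direction establishes. Take $\varphi \equiv \lnot R(x) \land \lnot S(x)$ with $|I| \ge 2$ in a structure where some element falsifies $R$, another falsifies $S$, but no single element falsifies both (e.g.\ $M=\{1,2\}$, $R=\{2\}$, $S=\{1\}$): then $\Pi \models \exists x.\,\varphi$, witnessing the two negative literals at two different coordinates, while $\mathcal{M} \not\models \exists x.\,\varphi$. So querying the $\Pi$-verifier on the same formula does not decide $\mathcal{M}$-satisfiability, and ``the singleton index set'' is not available because $I$ is fixed by the theorem statement. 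The paper's sketch routes around this by first showing that adding a relation symbol $R(\overline{x}) := \varphi(\overline{x})$ for quantifier-free $\varphi$ preserves NP membership of the quantifier-free theory in both directions, and then observing that in the extended product the single \emph{positive} atom $R(\overline{x})$ holds iff $\varphi$ holds at every index, so (using only that $I$ is non-empty) its satisfiability in the extended product coincides with the $\mathcal{M}$-satisfiability of $\varphi$. You need some such device --- collapsing the whole matrix into one positive literal --- to make the easy direction go through.
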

\begin{proof}
Assume that $V_C$ is a polynomial time verifier for $Th_{\exists^*}(\mathcal{M})$. Figure \ref{fig:prod-verifier} gives a polynomial time verifier $V$ for $Th_{\exists^*}(\Pi)$. We show that the machine is a verifier for $Th_{\exists^*}(\Pi)$: 

\begin{figure*}[ht!]
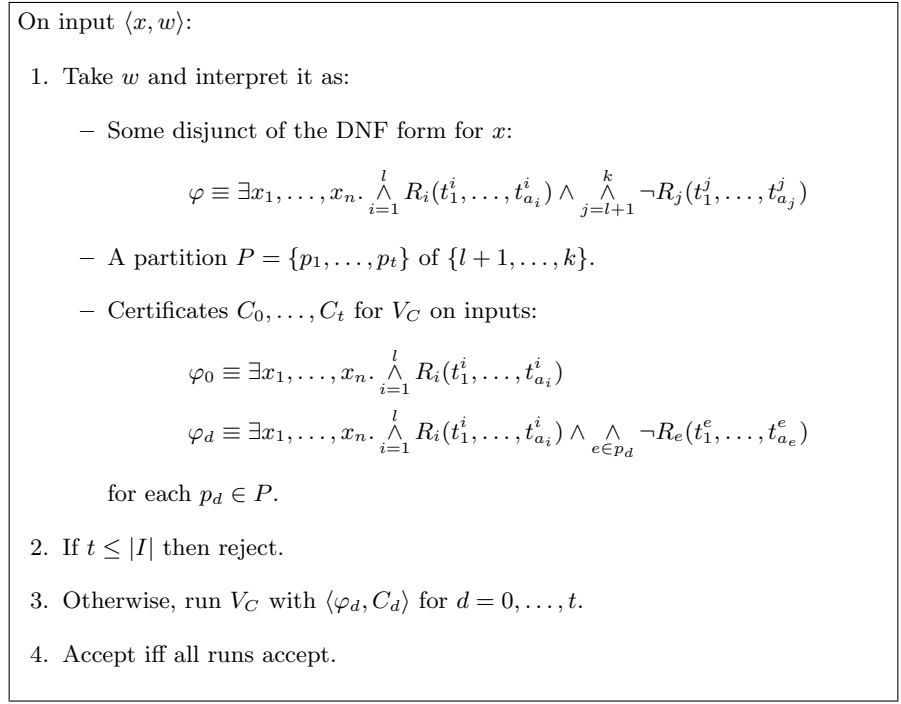

\fbox{\parbox{.95\textwidth}{
On input $\langle x, w \rangle$:

\begin{enumerate}
\setlength\itemsep{1em}
\item Take $w$ and interpret it as:
    \vspace{1em}
    \begin{itemize}
        \item Some disjunct of the DNF form for $x$: \[\varphi \equiv \exists x_1, \ldots, x_n. \mathop{\land}_{i = 1}^l R_i(t_1^i, \ldots, t_{a_i}^i) \land \mathop{\land}_{j 
= l+1}^k \lnot R_j(t_1^j, \ldots, t_{a_j}^j)\]

        \item  A partition $P = \{p_1, \ldots, p_t\}$ of $\{l+1,\ldots,k\}$.

        \item Certificates $C_0, \ldots, C_t$ for $V_C$ on inputs: \begin{align*}
\varphi_0 & \equiv  \exists x_1, \ldots, x_n. \mathop{\land}_{i = 1}^l R_i(t_1^i, \ldots, t_{a_i}^i) \\ \varphi_{d} & \equiv \exists x_1, \ldots, x_n.  \mathop{\land}_{i = 1}^l R_i(t_1^i, \ldots, t_{a_i}^i) \land \mathop{\land}_{e \in p_d} \lnot R_e(t_1^e, \ldots, t_{a_e}^e)    
\end{align*}

    for each $p_d \in P$.
    \end{itemize}
     
\item If $t \le |I|$ then reject.

\item Otherwise, run $V_C$ with $\langle \varphi_d, C_d \rangle$ for $d = 0, \ldots, t$.

\item Accept iff all runs accept.
\end{enumerate}
}}
\caption{Verifier for $Th_{\exists^*}(M^I)$}
\label{fig:prod-verifier}
\end{figure*}

\vspace{.5em}

\begin{itemize}
\setlength\itemsep{.5em}
\item $w$ has polynomial size in $|x|$:

\begin{itemize}[label={}]
\vspace{.5em}

\item By lemma \ref{lem:size}, $|\varphi| = O(|x|)$.

\item Thus, $k = O(|x|)$.

\item $P = O(|x|^2)$ since $P$ can be written with $k \log(k) + k$ bits. 

\item Since $|C_d| = O(|\varphi_d|^{c_d})$ and $|\varphi_d| \le |\varphi| = O(|x|)$, $|C_d| = O(|x|^{c_d})$.
\item Thus, $|w| = |\varphi| + |P| + \sum\limits_{d = 0, \ldots, t} |C_d| = O\Big(|x|^{\max\{2, \max\limits_d{c_d}\}}\Big)$.
\end{itemize}

\item $V$ runs in polynomial time in $|x|$: 

\vspace{.5em}

\begin{itemize}[label={}]
\item Building the list of $\varphi_d$ is $O(|x|^2)$.

\item As above, $|\varphi_d| \le |\varphi| = 
O(|x|)$.

\item So each call to $V_C$ runs in $O(|x|^f)$ ($V_C$ is polynomial time).

\item Like before, $k = O(|x|)$.

\item Therefore, $V$ runs in $O\Big(|x|^{\max\{2,f + 1\}}\Big)$.
\end{itemize}

\item $V$ is a verifier for $Th_{\exists^*}(\Pi)$:

\vspace{1em}

$\Rightarrow)$ If $x \in Th_{\exists^*}(\Pi)$ then writing $x$ in prenex DNF form, there is at least one disjunct $\varphi$ (as in figure \ref{fig:prod-verifier}) true in the product. Thus, there is $\overline{s} \in M^I$ satisfying:  \[
\begin{array}{c}
\dotfill \\

({t_1^i}^{\Pi}[\overline{x} \mapsto \overline{s}], \ldots, {t_{a_i}^i}^{\Pi}[\overline{x} \mapsto \overline{s}]) \in R_i^\Pi \\

\dotfill \\

({t_1^j}^{\Pi}[\overline{x} \mapsto \overline{s}], \ldots, {t_{a_j}^j}^{\Pi}[\overline{x} \mapsto \overline{s}]) \notin R_j^\Pi \\

\dotfill
\end{array}
\]

Using the semantics of products this means: \[
\begin{array}{c}
\dotfill \\

\forall r \in I. ({t_1^i}^{\mathcal{M}}[\overline{x} \mapsto \overline{s}(r)], \ldots, {t_{a_i}^i}^{\mathcal{M}}[\overline{x} \mapsto \overline{s}(r)]) \in R_i^{\mathcal{M}} \\

\dotfill \\

\exists r \in I. ({t_1^j}^{\mathcal{M}}[\overline{x} \mapsto \overline{s}(r)], \ldots, {t_{a_j}^j}^{\mathcal{M}}[\overline{x} \mapsto \overline{s}(r)]) \notin R_j^\mathcal{M} \\

\dotfill
\end{array}
\]

So there is a map $r: \{l+1, \ldots, k\} \to I$ that assigns to each formula, one index where it holds. $r$ induces a partition $P = r^{-1}(I)$ of $\{l+1, \ldots, k\}$ with $t = |P| \le \min(|I|,k-l)$. Each part $p_d = \{e_1, \ldots, e_m\}$ and each associated index $r_d = r(e_i)$, satisfy the following system:
\[
\begin{array}{c}
\dotfill \\

({t_1^i}^{\mathcal{M}}[\overline{x} \mapsto \overline{s}(r_p)], \ldots, {t_{a_i}^i}^{\mathcal{M}}[\overline{x} \mapsto \overline{s}(r_p)]) \in R_i^\mathcal{M} \\

\dotfill \\

({t_1^{e_1}}^{\mathcal{M}}[\overline{x} \mapsto \overline{s}(r_p)], \ldots, {t_{a_{e_1}}^{e_1}}^{\mathcal{M}}[\overline{x} \mapsto \overline{s}(r_p)]) \notin R_{e_1}^\mathcal{M} \\

\dotfill \\

({t_1^{e_m}}^{\mathcal{M}}[\overline{x} \mapsto \overline{s}(r_p)], \ldots, {t_{a_{e_m}}^{e_m}}^{\mathcal{M}}[\overline{x} \mapsto \overline{s}(r_p)]) \notin R_{e_m}^\mathcal{M}
\end{array}
\] 
Equivalently, for each $d \in \{ 1, \ldots, t\}$, $
\mathcal{M} \models \varphi_d[\overline{x} \mapsto \overline{s}(r_d)]
$. For $d = 0$, we set:
\[
r_0 = 
\begin{cases}
\text{any index } i \in I & \text{ if } t = 0 \\
\text{some } r_d \in \{r_1, \ldots, r_t\}         & \text{ if } t > 0
\end{cases}
\] Then $
\mathcal{M} \models \varphi_0[\overline{x} \mapsto \overline{s}(r_0)]
$. By definition of $V_C$, there are polynomially-sized certificates $C_0, \ldots, C_t$ such that $V_C$ accepts $\langle \varphi_d, C_d \rangle$ for each $d$. Thus $V$ accepts $\langle x, \langle \varphi, P, C_0, \ldots, C_t \rangle \rangle$.

\vspace{1em}

$\Leftarrow)$ Let $w = \langle \varphi, P, \{C_d\}_{d \in \{0, \ldots, t\}} \rangle$ be a certificate such that $V$ accepts $\langle x, w \rangle$. Then, by step 2, $t = |P| \le |I|$ and for each $d \in \{0, \ldots, t\}$, $V_C$ accepts $\langle \varphi_d, C_d \rangle$, i.e. $\mathcal{M} \models \varphi_d$. So there are solutions $x_{\cdot i} = (x_{1i}, \ldots, x_{ni})^t$ to the formulas: \begin{align*}
\varphi_0 & \equiv  \exists x_{10}, \ldots, \exists x_{n0}. \mathop{\land}_{i = 1}^l R_i(t_1^i, \ldots, t_{a_i}^i) \\
\varphi_d & \equiv  \exists x_{1d}, \ldots, \exists x_{nd}. \mathop{\land}_{i = 1}^l R_i(t_1^i, \ldots, t_{s_i}^i) \land \mathop{\land}_{e \in p_d} \lnot R_e(t_1^e, \ldots, t_{a_e}^e)     
\end{align*}

Fix distinct $i_1, \ldots, i_t \in I$. Consider the $n \times |I|$ matrix with entries:
\[
s_{ji} = 
\begin{cases}
x_{ji} & \text{if } i \in \{i_1, \ldots, i_t\} \\
x_{j0} & \text{otherwise}
\end{cases}
\]
The rows of this matrix $\overline{s} = \{s_1, \ldots, s_n\}$ are solutions of $\varphi$ in the product structure:
\[
\begin{array}{c}
\dotfill \\
({t_1^i}^{\Pi}[\overline{x} \mapsto \overline{s}], \ldots, {t_{a_i}^i}^{\Pi}[\overline{x} \mapsto \overline{s}]) \in R_i^\Pi \\

\dotfill \\

({t_1^j}^{\Pi}[\overline{x} \mapsto \overline{s}], \ldots, {t_{a_j}^j}^{\Pi}[\overline{x} \mapsto \overline{s}]) \notin R_j^\Pi \\

\dotfill
\end{array}
\]

Using the definition of product, it is sufficient to show: \[
\begin{array}{c}
\dotfill \\
\forall r \in I. ({t_1^i}^{\mathcal{M}}[\overline{x} \mapsto s(r)], \ldots, {t_{a_i}^i}^{\mathcal{M}}[\overline{x} \mapsto s(r)]) \in R_i^\mathcal{M} \\

\dotfill \\

\exists r \in I. ({t_1^j}^{\mathcal{M}}[\overline{x} \mapsto s(r)], \ldots, {t_{a_j}^j}^{\mathcal{M}}[\overline{x} \mapsto s(r)]) \notin R_j^\mathcal{M} \\

\dotfill
\end{array}
\]

For $i \in \{1, \ldots, l\}$ and each $r \in I$, the following formula needs to hold: \[
({t_1^i}^{\mathcal{M}}[\overline{x} \mapsto s(r)], \ldots, {t_{s_i}^i}^{\mathcal{M}}[\overline{x} \mapsto s(r)]) \in R_i^\mathcal{M}
\] If $r \in \{i_1, \ldots, i_t\}$ then $s(r) = x_{\cdot r}$ (i.e. all $x_{1r}, \ldots, x_{nr}$) and the equation holds since $\mathcal{M} \models \varphi_r[x_{\cdot r}]$. Otherwise, $s(r) = x_{\cdot 0}$ and the equation holds since $\mathcal{M} \models \varphi_0[x_{\cdot 0}]$.

\vspace{1em}

For $j \in \{l+1, \ldots, k\}$ and some $r \in I$, the following formula needs to hold: \[ ({t_1^j}^{\mathcal{M}}[\overline{x} \mapsto s(r)], \ldots, {t_{s_j}^j}^{\mathcal{M}}[\overline{x} \mapsto s(r)]) \notin R_j^\mathcal{M}
\] We take $r = i_d$ such that $j \in p_d$. Then $s(r) = x_{\cdot r}$ and the equation holds since $\mathcal{M} \models \varphi_r[x_{\cdot r}]$.
\end{itemize}

This proves the left to right implication. For completeness, we sketch a justification of the intuitively clear right to left implication. The idea is that one can extend the signature of $L$ with relations $R$ whose interpretation is that of any quantifier-free formula $\varphi$ while retaining NP complexity. Indeed, let $\mathcal{N}$ be any structure for the language $L$ and let $\varphi(x_1,\ldots,x_n)$ be any formula of $L$. Define $R(x_1, \ldots,x_n) := \varphi(x_1,\ldots,x_n)$ and $\mathcal{N}^{ext}$ the model $\mathcal{N}$ extended with the relation symbol $R$ in such a way that $R^{\mathcal{N}^{ext}}(v_1,\ldots,v
_n) = \varphi^{\mathcal{N}}(v_1, \ldots,v_n)$ for values $v_i$ of the carrier of $\mathcal{N}$. We show that:
\[
Th_{\exists^*}(\mathcal{N}) \in NP \iff Th_{\exists^*}(\mathcal{N}^{ext}) \in NP
\]
First observe that $|\varphi(x_1,\ldots,x_n)|$ is an affine function in $|x_i|$: there is a constant term accounting for the logical symbols, plus terms $a_i |x_i|$ accounting for the occurrences of the $x_i$. Now, if $\psi \in Th_{\exists^*}(\mathcal{N})$ then when we contract the occurrences of $\varphi$ into $R$ we still get a linear size in $|\psi|$. Therefore, the verifier for $Th_{\exists^*}(\mathcal{N}^{ext})$ gives the result. If on the other hand, $\psi \in Th_{\exists^*}(\mathcal{N}^{ext})$ then expanding the occurrences of $\mathcal{R}$ each $|x_i|$ is bounded in $|\psi|$, so the expanded expression augments its size by a quadratic factor $O(|\psi|^2)$. The verifier for $Th_{\exists^*}(\mathcal{N})$ gives the result. Finally, let's see that:
\[
Th_{\exists^*}(\Pi^{ext}) \in NP \implies Th_{\exists^*}(\mathcal{M}) \in NP
\]
Given $\varphi \in Th_{\exists^*}(\Pi^{ext})$, we define a relation $R := \varphi$ and consider the corresponding extended language $Th_{\exists^*}(\Pi^{ext(\varphi)})$ which by assumption is in NP. Thus, it is decidable in NP that $R$ holds in the product structure. But, $R^\Pi \equiv \forall i \in I. \varphi^{\mathcal{M}}$. Given that $I$ is non-empty, we have that the verifier for $Th_{\exists^*}(\Pi^{ext(\varphi)})$ can determine if $\varphi \in Th_{\exists^*}(\mathcal{M})$.
\end{proof}

\subsection{Corollary: Quantifier-Free Skolem Arithmetic is in NP}

Although not needed for our final result, the technique of theorem \ref{thm:power} is of independent interest. An example is showing that the satisfiability problem for the quantifier-free fragment of Skolem arithmetic is in NP. This result was first proved by Gr\"adel \cite{gradel_dominoes_1989} using results by Sieveking and von zur Gathen \cite{gathen_bound_1978} with a proof that appears, on the surface, to be specific to the arithmetic theories.

Skolem arithmetic is the first-order theory of the structure $\langle \mathbb{N} \setminus \{0\}, \cdot, =, | \rangle$. This structure is isomorphic to the weak direct power \cite{feferman_first_1959,ferrante_computational_1979} of the structure $\langle \mathbb{N}, +, \le \rangle$. Thus, their existential theories coincide. A variation of the verifier in figure \ref{fig:prod-verifier}, ensuring that if $|I|$ is infinite then $0^n$ is a solution of $\varphi_0$ in $\mathcal{M}$, yields the NP complexity bound for the satisfiability of (existential and) quantifier-free formulas.

\section{Explicit Sets of Indices and a Polynomial Verifier for QFBAPA}
To prepare for generalization of the result from the previous section, we now review the QFBAPA complexity \cite{kuncak_towards_2007} using the notation of the present paper.
The intuition for our approach is that the verifier of figure \ref{fig:prod-verifier} is solving constraints on the array indices which can be schematically presented as in figure \ref{fig:indexset}. The figure presents a Venn region of sets defined by formulas of L. All indices must remain within the boundaries of the main region $A$. This region corresponds to the positive literals of $\varphi$ in figure \ref{fig:prod-verifier}: $\mathop{\land}\limits_{i = 1}^l R_i(t_1^i,\ldots,t_{a_i}^i)$. The negative literals $\mathop{\land}\limits_{j = l+1}^k \lnot R_j(t_1^j, \ldots, t_{a_j}^j)$ generate existential constraints. These can be interpreted as requiring a cardinality greater or equal than one in certain subregions of $A$. 

\begin{figure}[ht!]
\centering
\includegraphics[scale=0.7]{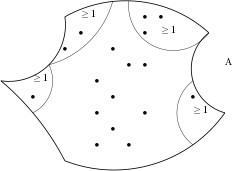}
\caption{An example Venn region with product constraints.}
\label{fig:indexset}
\end{figure}

To generalize our result we use the logic BAPA \cite{kuncak_deciding_2006}, whose language allows to express boolean algebra and cardinality constraints on sets. The satisfiability problem for the quantifier-free fragment of BAPA, often written as QFBAPA, is in NP (see section 3 of \cite{kuncak_towards_2007}). Figure \ref{fig:qfbapa-syntax} shows the syntax of the fragment. $F$ presents the boolean structure of the formula, $A$ stands for the top-level constraints, $B$ gives the boolean restrictions and $T$ the Presburger arithmetic terms. $\mathcal{U}$ represents the universal set and $MAXC$ gives the cardinality of $\mathcal{U}$. We will assume this cardinality to be finite for simplicity of the presentation, but it is straightforward to generalize the NP membership result to the case where the universe is infinite and the language contains additional predicate expressing finiteness of a set \cite[Section 3]{kuncak_ordered_2010}.
\begin{figure}[H]
\centering
\begin{align*}
F & ::= A \, | \, F_1 \land F_2 \, | \, F_1 \lor F_2 \, | \, \lnot F \\
A & ::= B_1 = B_2 \, | \, B_1 \subseteq B_2 \, | \, T_1 = T_2 \, | \, T_1 \le T_2 \, | \, K \text{ dvd } T \\
B & ::= x \, | \, \emptyset \, | \, \mathcal{U} \, | \, B_1 \cup B_2 \, | \, B_1 \cap B_2 \, | \, B^c \\
T & ::= k \, | \, K \, | \, \text{MAXC} \, | \, T_1 + T_2 \, | \, K \cdot T \, |  \, |B| \\
K & ::= \ldots \, | \, -2 \, | \, -1 \, | \, 0 \, | \, 1 \, | \, 2 \, | \, \ldots
\end{align*}
\caption{QFBAPA's syntax}
\label{fig:qfbapa-syntax}
\end{figure}

The basic argument to establish NP complexity of QFBAPA is based on a theorem by Eisenbrand and Shmonin \cite{eisenbrand_carathxe9odory_2006}, which in our context says that any element of an integer cone can be expressed in terms of a polynomial number of generators. Figure \ref{fig:pa-verifier} gives a verifier for this basic version of the algorithm.

The key step is showing equisatisfiability between 2.(b) and 2.(c). If $x_1, \ldots, x_k$ are the variables occurring in $b_0, \ldots, b_p$ then we write $p_\beta = \bigcap\limits_{i = 1}^k x_i^{e_i}$ for $\beta = (e_1,\ldots,e_k)$, $l_\beta = |p_\beta|$ and $\llbracket b_i \rrbracket_{\beta_j}$ the evaluation of $b_i$ as a propositional formula with the assignment given in $\beta$. Now, $|b_i| =  \sum\limits_{j = 0}^{2^e-1} \llbracket b_i \rrbracket_{\beta_j} l_{\beta_j}$, so the restriction $\bigwedge\limits_{i = 0}^k |b_i| = k_i$ becomes $\bigwedge\limits_{i = 0}^p \sum\limits_{j = 0}^{2^e-1} \llbracket b_i \rrbracket_{\beta_j} l_{\beta_j} = k_i$ which can be seen as a linear combination in $\{(\llbracket b_0 \rrbracket_{\beta_j}, \ldots, \llbracket b_p \rrbracket_{\beta_j}). j \in \{0, \ldots, 2^e-1\}\}$. Eisenbrand-Shmonin's result allows then to derive 2.(c) for $N$ polynomial in $|x|$. In the other direction, it is sufficient to set $l_{\beta_j} = 0$ for $j \in \{0, \ldots, 2^e-1\} \setminus \{i_1, \ldots, i_N\}$. Thus, we have:

\begin{theorem}[\cite{kuncak_towards_2007} ] \label{qfbapa-complexity}
QFBAPA is in NP. 
\end{theorem}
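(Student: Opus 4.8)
The plan is to describe a nondeterministic polynomial-time procedure — a verifier along the lines of Figure~\ref{fig:pa-verifier} — and to bound the size of its certificate and the cost of its checks. The certificate has three parts. The first is a truth assignment to the atomic formulas $A$ occurring in the input $F$; since a QFBAPA formula is a Boolean combination of such atoms, fixing this assignment reduces the problem to deciding a conjunction of atoms and negated atoms. The second part eliminates the Boolean-algebra literals: a negated inclusion $B_1\not\subseteq B_2$ is replaced by the cardinality constraint $|B_1\cap B_2^c|\ge 1$, a negated equality likewise by a nonemptiness constraint on a symmetric difference, while a positive inclusion or equality is recorded as an equation forcing the relevant Venn regions to have size zero. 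After this, every surviving atom is either a Presburger constraint over terms $|b_i|$, with each $b_i$ a Boolean combination of the set variables $x_1,\dots,x_k$, or a divisibility atom $K\ \mathrm{dvd}\ T$; replacing $K\ \mathrm{dvd}\ T$ by $T = K\cdot q$ for a fresh integer unknown $q$, and writing inequalities as equalities with fresh nonnegative slack variables, turns the whole thing into a system of linear Diophantine equations.

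Next I would substitute Venn regions for the set terms. Writing $p_\beta=\bigcap_{i=1}^k x_i^{e_i}$ for $\beta=(e_1,\dots,e_k)\in\{0,1\}^k$ and $l_\beta=|p_\beta|\ge 0$, each term $|b_i|$ equals $\sum_{\beta}\llbracket b_i\rrbracket_\beta\, l_\beta$, and the universe constraint reads $\sum_\beta l_\beta=\mathrm{MAXC}$. The system now says that the target vector assembled from the right-hand sides lies in the integer cone generated by the columns $(\llbracket b_0\rrbracket_\beta,\dots,\llbracket b_p\rrbracket_\beta,1)$ indexed by $\beta$, together with the finitely many unit columns coming from slacks. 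The obstacle is that there are $2^k$ region columns, so a direct certificate would be exponential.

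The main step — and the one I expect to be the crux — is the sparsification supplied by the integer Carath\'eodory theorem of Eisenbrand and Shmonin~\cite{eisenbrand_carathxe9odory_2006}: if the target lies in the integer cone at all, it is already a nonnegative integer combination of at most $N$ of the generators, with $N$ polynomial in the number of rows and in the bit-size of the entries, hence polynomial in $|x|$. This is precisely the equivalence between steps 2.(b) and 2.(c) of the verifier; the easy converse is witnessed by setting $l_\beta=0$ for every region outside the chosen index set $\{i_1,\dots,i_N\}$. Thus the third part of the certificate names these $N$ regions together with the values of $l_{i_1},\dots,l_{i_N}$ and of the auxiliary integers (the $q$'s and slacks), and the verifier merely checks a quantifier-free Presburger / integer-linear instance of size polynomial in $|x|$, whose solution has polynomially bounded bit length by the usual estimates. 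Combining the polynomial bounds on the three certificate components with the polynomial cost of each check yields a polynomial-time verifier, so QFBAPA $\in$ NP.
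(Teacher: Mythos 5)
Your proposal is correct and follows essentially the same route as the paper: reduce the Boolean-algebra atoms to cardinality constraints over Venn regions $p_\beta$ with $|b_i|=\sum_\beta \llbracket b_i\rrbracket_\beta\, l_\beta$, invoke the Eisenbrand--Shmonin integer Carath\'eodory theorem to certify that only polynomially many regions need nonzero cardinality, and delegate the resulting quantifier-free Presburger instance to a polynomial-time verifier, with the easy converse obtained by zeroing the remaining $l_\beta$. The only cosmetic difference is that you guess a truth assignment to the atoms up front, whereas the verifier of Figure~\ref{fig:pa-verifier} keeps the Boolean structure inside the formula $G$ handed to $V_{PA}$.
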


\begin{figure*}[ht!]
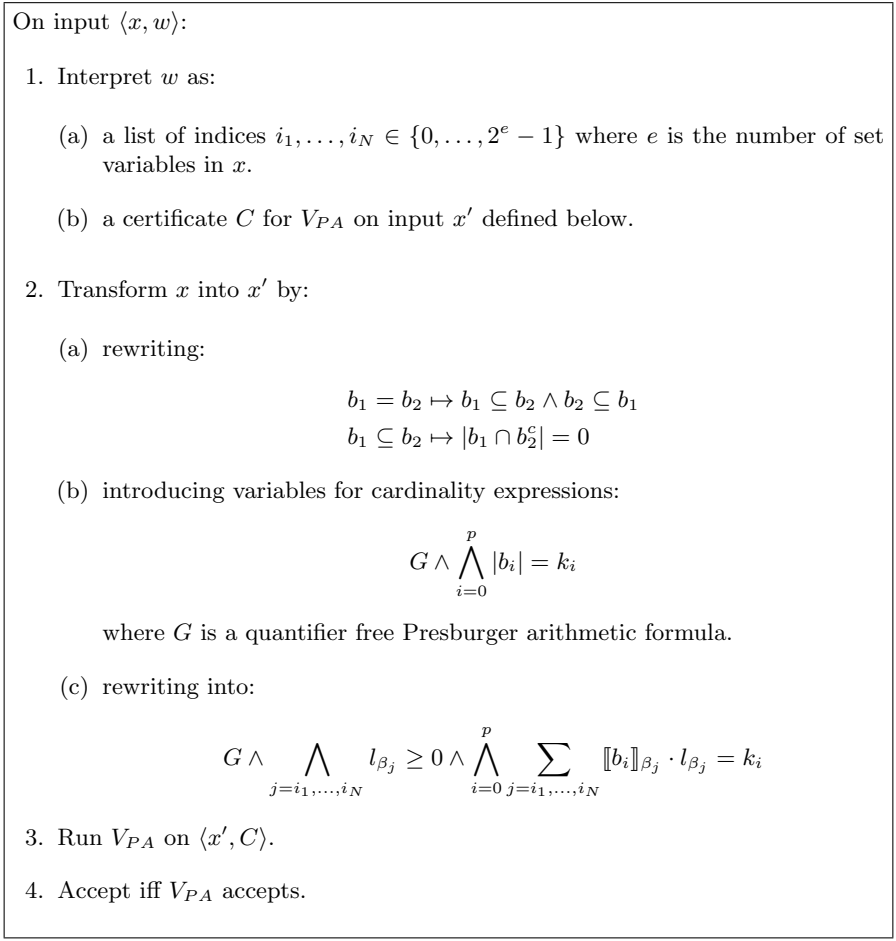

\fbox{\parbox{.95\textwidth}{
On input $\langle x, w \rangle$:

\begin{enumerate}
\setlength\itemsep{1em}
\item Interpret $w$ as:

\vspace{1em}

\begin{enumerate}
    \item a list of indices $i_1, \ldots, i_N \in \{0, \ldots, 2^e-1 \}$ where $e$ is the number of set variables in $x$.
    \item a certificate $C$ for $V_{PA}$ on input $x'$ defined below.
\end{enumerate}

\vspace{.5em}

\item Transform $x$ into $x'$ by:

\vspace{1em}

\begin{enumerate}
    \item rewriting: \begin{align*}
    b_1 = b_2 & \mapsto b_1 \subseteq b_2 \land b_2 \subseteq b_1 \\
    b_1 \subseteq b_2 & \mapsto |b_1 \cap b_2^c| = 0
    \end{align*}
    
    \item introducing variables for cardinality expressions: $$G \land \bigwedge_{i = 0}^{p} |b_i| = k_i$$ where $G$ is a quantifier free Presburger arithmetic formula.
    
    \item rewriting into:
    $$ G \land \bigwedge\limits_{j = i_1, \ldots, i_N} l_{\beta_j} \ge 0 \land \bigwedge_{i = 0}^{p} \sum_{j = i_1, \ldots, i_N} \llbracket b_i \rrbracket_{\beta_j} \cdot l_{\beta_j} = k_i$$
\end{enumerate}

\item Run $V_{PA}$ on $\langle x', C \rangle$.

\item Accept iff $V_{PA}$ accepts.
\end{enumerate}
}}
\caption{Verifier for QFBAPA}
\label{fig:pa-verifier}
\end{figure*}

\section{NP Complexity for QFBAPAI}
We are now ready to present our main result, which extends NP membership of product structures and of QFBAPA to the situation where we interpret QFBAPA sets as sets of indices (subsets of the set $I$) in which quantifier-free formulas hold.

\begin{definition}
\noindent We consider the satisfiability problem for QFBAPA formulas $F$ whose set variables are index sets defined by quantifier-free formulas $\varphi_i$ of $L$ applied to either component theory constants or to components of the array variables:
\[  
\begin{split}
\exists c_1, \ldots, c_m.
& \exists x_1, \ldots, x_n. \\ 
     & F(S_1,\ldots,S_k) \land \bigwedge_{i = 1}^k 
       S_i = \{ r \in I \mid \varphi_i(x_1(r),\ldots,x_n(r), c_1, \ldots, c_m) \}
\end{split}       
\]

\noindent We call this problem \mbox{QFBAPAI}, standing for interpreted \mbox{QFBAPA}.
\end{definition}

\begin{theorem} \label{thm:qfbapai}
QFBAPAI is in NP.
\end{theorem}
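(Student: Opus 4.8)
The plan is to run the QFBAPA verifier of Figure~\ref{fig:pa-verifier} and the verifier $V_C$ for $Th_{\exists^*}(\mathcal{M})$ in tandem. Recall that the QFBAPA verifier reduces a boolean‑algebra‑with‑cardinalities formula $F$ over set variables $S_1,\ldots,S_k$ to a polynomially sized Presburger formula whose free variables are the cardinalities $l_\beta = |p_\beta|$ of a polynomial number of Venn regions $p_\beta = \bigcap_{l=1}^k S_l^{e_l}$, with $\beta = (e_1,\ldots,e_k) \in \{0,1\}^k$. The one new ingredient is the following observation: once $S_l$ is interpreted as $\{\, r \in I \mid \varphi_l(x_1(r),\ldots,x_n(r),c_1,\ldots,c_m)\,\}$, the region $p_\beta$ is exactly the set of indices $r$ whose column $\overline{x}(r) = (x_1(r),\ldots,x_n(r))$ satisfies the quantifier-free $L$-formula $\bigwedge_{l=1}^k \varphi_l(\overline{x}(r),\overline{c})^{e_l}$ (writing $\varphi^1$ for $\varphi$ and $\varphi^0$ for $\lnot\varphi$), and the columns at distinct indices are mutually independent \emph{except} that they all read the same constants $\overline{c}$. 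Hence a region can be made non-empty by exhibiting one witnessing column, reused at as many indices as its cardinality requires, and a whole family of regions is simultaneously realizable precisely when one choice of $\overline{c}$ makes all the associated $L$-formulas satisfiable and the cardinalities sum to $|I|$.

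Concretely, the verifier reads its certificate as: (i) a polynomial number $N$ of distinct cubes $\beta_1,\ldots,\beta_N$ (reject if $N$ exceeds the Eisenbrand--Shmonin bound, which is polynomial in the input size); (ii) a certificate for $V_{PA}$ on the formula
\[
G' \;\equiv\; G \,\land\, \bigwedge_{j=1}^N l_{\beta_j} \ge 1 \,\land\, \bigwedge_{i=0}^p \left( \sum_{j=1}^N \llbracket b_i\rrbracket_{\beta_j}\, l_{\beta_j} = k_i \right) \,\land\, \sum_{j=1}^N l_{\beta_j} = \text{MAXC},
\]
where $G$ and the boolean terms $b_i$ are obtained from $F$ exactly as in Figure~\ref{fig:pa-verifier}; and (iii) a certificate for $V_C$ on the existential $L$-formula
\[
\Psi \;\equiv\; \exists \overline{c}, \overline{y}^1, \ldots, \overline{y}^N.\; \bigwedge_{j=1}^N \bigwedge_{l=1}^k \varphi_l(\overline{y}^j,\overline{c})^{(\beta_j)_l},
\]
with $\overline{y}^j$ a fresh $n$-tuple of component variables for each $j$. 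The verifier accepts iff both $V_{PA}$ and $V_C$ accept. Since $k$ and every $|\varphi_l|$ are linear in $|x|$ and $N$ is polynomial, $G'$ and $\Psi$ have polynomial size, so the certificate is polynomially bounded and the verifier runs in polynomial time, by essentially the same bookkeeping as in the proofs of Theorems~\ref{thm:power} and~\ref{qfbapa-complexity}.

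For soundness, an accepting run yields cardinalities $l_{\beta_j} \ge 1$ with $\sum_j l_{\beta_j} = \text{MAXC} = |I|$ meeting the QFBAPA constraints, together with values $\overline{v}$ for $\overline{c}$ and columns $\overline{y}^j$ satisfying $\Psi$. Partition $I$ into sets $I_j$ with $|I_j| = l_{\beta_j}$, set $x_t(r) := y^j_t$ for $r \in I_j$ and $\overline{c} := \overline{v}$; because the $\beta_j$ are distinct and each $\overline{y}^j$ witnesses the pattern $\beta_j$, the interpreted sets $S_l$ give $p_{\beta_j} = I_j$ and $p_\beta = \emptyset$ for every other $\beta$, so all cardinality constraints, hence $F$, hold. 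For completeness, take any QFBAPAI solution with constants $\overline{v}$ and arrays $\overline{x}$, and let $\hat l_\beta = |p_\beta|$. Only the non-empty cubes contribute to the cardinalities; each such cube is satisfied in $\mathcal{M}$ by the column of any of its indices \emph{under the same $\overline{v}$}, so the vector $(k_0,\ldots,k_p,\text{MAXC})$ of required cardinalities lies in the integer cone generated by the cube-vectors $(\llbracket b_0\rrbracket_\beta,\ldots,\llbracket b_p\rrbracket_\beta,1)$ of those non-empty cubes. Applying the Eisenbrand--Shmonin theorem to exactly this generating set produces a polynomially sized subfamily with positive integer coefficients; these are the $\beta_j$ and $l_{\beta_j}$, they satisfy $G'$, and $\overline{v}$ together with the columns of the chosen cubes satisfies $\Psi$, giving an accepting certificate.

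The delicate point, and the step I expect to be the main obstacle, is exactly that last one: Eisenbrand--Shmonin must be applied to the generating set of \emph{non-empty} cubes of the given solution rather than to all $2^k$ cubes, so that the polynomially many retained cubes are guaranteed to be $\mathcal{M}$-satisfiable, and satisfiable under one common assignment to $\overline{c}$; this, together with the added constraint $\sum_j l_{\beta_j} = \text{MAXC}$ that forces the chosen cubes to cover all of $I$ (so no stray index violates a set definition), is what makes the whole componentwise obligation collapse into the single existential $L$-formula $\Psi$. The case of infinite $I$ is handled as for QFBAPA and for the Skolem arithmetic corollary: one adds a finiteness predicate, drops the exact-sum constraint when $\text{MAXC}$ is infinite, and instead designates one cube as the unbounded one, requiring the remaining finite cardinalities together with it to exhaust $I$.
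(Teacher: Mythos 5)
Your verifier is correct and follows the same overall strategy as the paper's (Figure~\ref{fig:intverif}): guess a polynomial-size family of Venn cubes $\beta_1,\ldots,\beta_N$ justified by Eisenbrand--Shmonin applied only to the \emph{non-empty} cubes of a given solution, check with $V_C$ that all selected cubes are simultaneously realizable in $\mathcal{M}$ under one shared tuple $\overline{c}$, and check the arithmetic side with $V_{PA}$; your soundness argument (partition $I$, reuse one witnessing column per cube, and use that distinct full cubes are mutually exclusive so no stray index inflates a region) is exactly the paper's. The one place you genuinely diverge is the treatment of indices not covered by the selected cubes. The paper certifies an extra bit $b$: either the selected cubes cover $I$ ($b=1$, enforced as $\bigcup_{\beta_j}\bigcap_i S_i'^{\beta_j(i)} = I$ on the QFBAPA side), or an additional component witness $\overline{x}_0$ is guessed whose column falls \emph{outside} every selected cube, and the leftover indices are all mapped to it. You instead append a coordinate $1$ to each generator vector and add the single constraint $\sum_{j} l_{\beta_j} = \mathrm{MAXC}$ to the Presburger formula, so that Eisenbrand--Shmonin is forced to produce a subfamily whose cardinalities already exhaust $I$ and no leftover indices exist. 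This is a clean simplification: it removes the case split and the extra witness entirely, at the cost of one extra dimension in the Eisenbrand--Shmonin application (harmless, since the bound is polynomial in the dimension), and it sidesteps the somewhat delicate claim in the paper's $b=0$ branch that an index realizing an \emph{unselected} cube must exist whenever $|I| > \sum_j l'_{\beta_j}$. Your closing remark correctly identifies the crux --- restricting the generating set to the non-empty, jointly-$\overline{c}$-realizable cubes --- and your deferral of the infinite-$I$ case to the standard QFBAPA treatment matches the paper's stated simplifying assumption that $\mathrm{MAXC}$ is finite.
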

\begin{proof}

Let $V_{QFBAPA}$ be a polynomial time verifier for QFBAPA and let $V_C$ be a polynomial time verifier for the component theory. Figure \ref{fig:intverif} gives a verifier $V$ for QFBAPAI. We abbreviate $(x_1, \ldots, x_n)$ by $\overline{x}$ and $(c_1, \ldots, c_m)$ by $\overline{c}$.

\begin{figure*}[ht!]
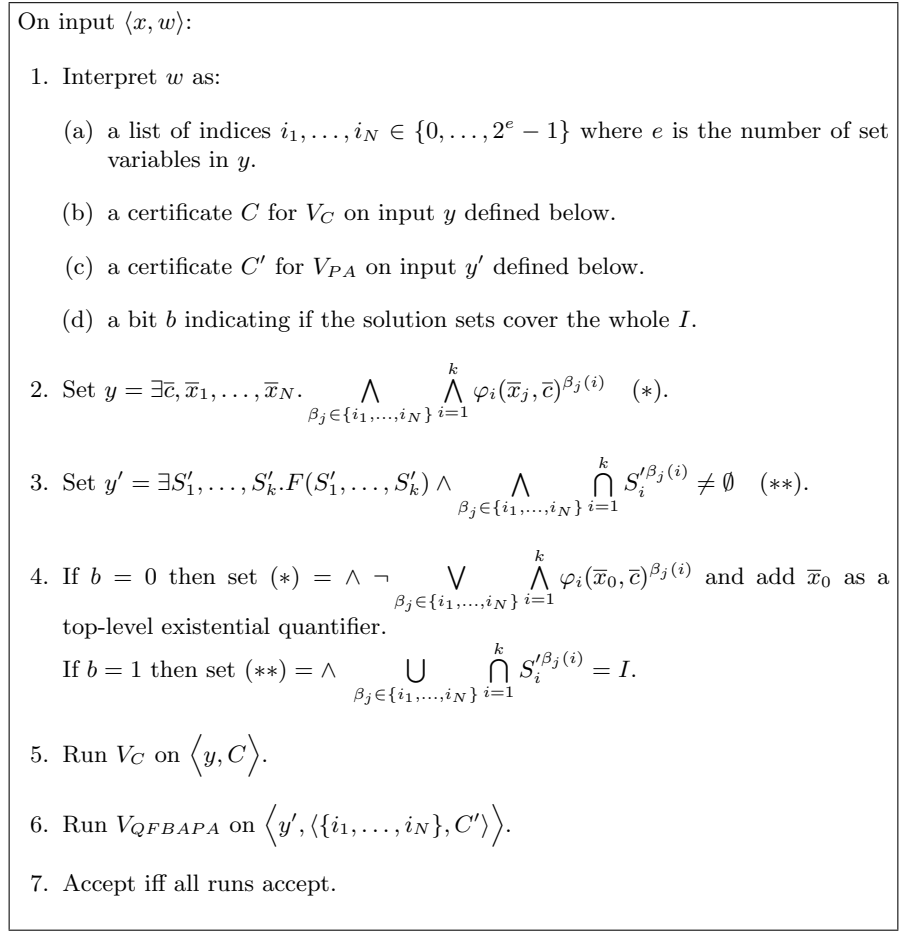

\fbox{\parbox{.95\textwidth}{
On input $\langle x, w \rangle$:

\begin{enumerate}
\setlength\itemsep{1em}
\item Interpret $w$ as:
\vspace{1em}
\begin{enumerate}
    \item a list of indices $i_1, \ldots, i_N \in \{0, \ldots, 2^e-1\}$ where $e$ is the number of set variables in $y$. 
    \item a certificate $C$ for $V_C$ on input $y$ defined below.
    \item a certificate $C'$ for $V_{PA}$ on input $y'$ defined below. 
    \item a bit $b$ indicating if the solution sets cover the whole $I$.
\end{enumerate}

\item Set $y = \exists \overline{c}, \overline{x}_1, \ldots, \overline{x}_N. \bigwedge\limits_{\beta_j \in \{i_1, \ldots, i_N \}}  \bigwedge\limits_{i = 1}^k \varphi_i(\overline{x}_j, \overline{c})^{\beta_j(i)} \hspace{1em} (*)$.

\item Set $y' = \exists S_1', \ldots, S_k'. F(S_1', \ldots, S_k') \land \bigwedge\limits_{\beta_j \in \{i_1, \ldots, i_N\}}  \mathop{\bigcap}\limits_{i = 1}^k S_i'^{\beta_j(i)} \neq \emptyset \hspace{1em} (**)$.

\item If $b = 0$ then set $(*) = \land \hspace{.5em}
\lnot \bigvee\limits_{\beta_j \in \{i_1, \ldots, i_N\}} \bigwedge\limits_{i = 1}^k \varphi_i(\overline{x}_0, \overline{c})^{\beta_j(i)}$ and add $\overline{x}_0$ as a top-level existential quantifier.
    
If $b = 1$ then set $(**) = \land \hspace{.5em} \bigcup\limits_{\beta_j \in \{i_1, \ldots, i_N\}} \bigcap\limits_{i = 1}^k S_i'^{\beta_j(i)} = I$.

\item Run $V_C$ on $\Big\langle y, C \Big\rangle$.

\item Run $V_{QFBAPA}$ on $\Big\langle y', \langle \{i_1, \ldots, i_N\}, C' \rangle \Big\rangle$.

\item Accept iff all runs accept.
\end{enumerate}
}}
\caption{Verifier for QFBAPA interpreted over index-sets.} 
\label{fig:intverif}
\end{figure*}    
\vspace{1em}

\noindent $\Rightarrow)$ If $x \in \mbox{QFBAPAI}$ then there exist $\overline{c}, \overline{s}$ satisfying: 
\[
F(S_1, \ldots, S_k) \land \bigwedge_{i = 1}^k S_i = \{ r \in I | \varphi_i(\overline{s}(r), \overline{c}) \}
\]
Define $S_i := \{ r \in I | \varphi_i(\overline{s}(r), \overline{c}) \}$. Then, the method of theorem \ref{qfbapa-complexity} applied to $F(S_1, \ldots, S_k)$ yields a formula $
G \land \bigwedge_{i = 0}^p |b_i| = k_i
$. Using $|b_i| = \sum\limits_{\beta \models b_i} \Big| \bigcap\limits_{i = 1}^k S_i^{\beta(i)}\Big|$ and setting $p_{\beta} := \bigcap\limits_{i = 1}^k S_i^{\beta(i)}, l_{\beta} := \Big|p_{\beta}\Big|$, yields $
G \land \bigwedge\limits_{i = 0}^p \sum\limits_{j = 0}^{2^e-1} \llbracket b_i \rrbracket_{\beta_j} \cdot l_{\beta_j} = k_i
$. Remove those $\beta$ where $l_{\beta} = 0$. Since:
\[
p_\beta = \bigcap\limits_{i = 1}^k \{r \in I| \varphi_i(\overline{s}(r), \overline{c}) \}^{\beta(i)} = \Bigg\{r \in I \Bigg| \mathop{\bigwedge}\limits_{i = 1}^k \varphi_i(\overline{s}(r), \overline{c})^{\beta(i)}\Bigg\}
\]
this includes those $\beta$ such that $\mathop{\bigwedge}\limits_{i = 1}^k \varphi_i(\overline{s}(r), \overline{c})^{\beta(i)}$ is not satisfiable. We obtain a reduced set of indices $\mathcal{R} \subseteq \{0, \ldots, 2^e-1\}$ where $
G \land \bigwedge\limits_{i = 0}^p \sum\limits_{\beta \in \mathcal{R}} \llbracket b_i \rrbracket_{\beta} \cdot l_{\beta} = k_i
$. Eisenbrand-Shmonin's theorem yields a polynomial family of indices such that $
G \land \bigwedge\limits_{i = 0}^p \sum\limits_{\beta \in \{i_1, \ldots, i_N\} \subseteq \mathcal{R}} \llbracket b_i \rrbracket_{\beta} \cdot l_{\beta}' = k_i
$ for non-zero $l_{\beta}'$. 

\vspace{.5em}

For each $\beta \in \{i_1, \ldots, i_N\}$, since $l_\beta \neq 0$, there exists $r_\beta \in I$ such that $
\bigwedge\limits_{i = 1}^k \varphi_i(\overline{s}(r_\beta), \overline{c})^{\beta(i)}
$. So the formula $y$ without (*) is satisfied.

\vspace{.5em}

The satisfiability  of the cardinality restrictions on $l_\beta'$ implies the existence of sets of indices $S_i'$ such that for each $\beta  \in \{i_1, \ldots, i_N\}$, $|p_\beta'| = l_\beta'$. Observe that $|I| = \sum\limits_{\beta \in \mathcal{R}} l_{\beta}$. Distinguish two cases:

\begin{itemize}
\item If $|I| > \sum\limits_{\beta \in \{i_1, \ldots, i_N\} } l_{\beta}'$ then there is at least one index $r_0$ such that $\overline{s}(r_0)$ satisfies $
\bigwedge\limits_{i = 1}^k \varphi_i(\overline{s}(r_0), \overline{c})^{\beta(i)}
$ for $\beta \notin \{i_1, \ldots, i_N\}$. Therefore, the formula $y$ with (*) is satisfied. In this case, define: \[
\overline{s}'(r) = 
\begin{cases}
\overline{s}(r_{\beta}) & \text{ if } r \in p_\beta' \text{ and } \beta \in \{i_1,\ldots,i_N\} \\
\overline{s}(r_0)         & \text{otherwise} 
\end{cases}
\]
and choose $b = 0$.

\item If $|I| = \sum\limits_{\beta \in \{i_1, \ldots, i_N\} } l_{\beta}'$ then define: 
\[
\overline{s}'(r) = 
\begin{cases}
\overline{s}(r_{\beta}) & \text{ if } r \in p_\beta'  \text{ and } \beta \in \{i_1,\ldots,i_N\}
\end{cases}
\] 
Here we choose $b = 1$. 
\end{itemize}

In any case, the formula $y$ that $V_C$ receives as input is satisfied. Since $N$ is polynomial in $|x|$, this gives a polynomially-sized certificate $C$ such that $V_C$ accepts $\langle y,C \rangle$ in polynomial time. 

\vspace{.5em}

Let $S_i'' = \{ r \in I |  \varphi_i(\overline{s}'(r), \overline{c}) \}$. Then $S_1'', \ldots, S_k''$ satisfy $y'$ by construction:

\begin{itemize}
\setlength\itemsep{1em}
\item Observe that for each $\beta \in \{i_1, \ldots, i_N\}$, $p_\beta'' = p_\beta'$.
\item For each $\beta \in \{i_1, \ldots, i_N\}$, 
$p_\beta''  \neq \emptyset$, since $l_\beta' \neq 0$.

\item If $b = 1$ then $\bigcup\limits_{\beta \in \{i_1, \ldots, i_N\}} p_\beta'' = I$ since $|I| = \sum\limits_{\beta \in \{i_1, \ldots, i_N\} } l_{\beta}'$.

\item The cardinality restrictions are satisfied by definition.
\end{itemize}

Again, since $N$ is polynomial in $|x|$, $|y'|$ is polynomial in $|x|$ too. By the above, it is also satisfiable. Thus, there exists a polynomially-sized certificate $C'$ for $V_{PA}$ such that $V_{QFBAPA}$ accepts $\langle \{i_1,\ldots,i_N\},C' \rangle
$ in polynomial time. So $V$ accepts $\langle x, \langle \{i_1, \ldots, i_N \}, C, C', b \rangle \rangle$ in polynomial time.

\vspace{1em}

$\Leftarrow)$ If $V$ accepts $\langle x, w \rangle$ in polynomial time then: 

\vspace{.5em}

\begin{itemize}
\item $\Big\langle y, C \Big\rangle$ is accepted by $V_C$, so there is a tuple $\overline{c}$ and for each $\beta \in \{i_1, \ldots, i_N\}$, there are tuples $s_\beta$, such that $\bigwedge\limits_{i = 1}^k \varphi_i(s_\beta(1), \ldots, s_\beta(n), \overline{c})^{\beta(i)}$.
\item $\langle y', \langle \{i_1, \ldots, i_N\}, C' \rangle \rangle$ is accepted by $V_{QFBAPA}$, so there exist sets $S_i'$ such that: \[
F(S_1', \ldots, S_k') \land \bigwedge\limits_{\beta \in \{i_1, \ldots, i_N\}} \mathop{\bigcap}\limits_{i = 1}^{k} S_i'^{\beta(i)} \neq \emptyset
\] 
\end{itemize}

\vspace{1em}
    
Interpreting $S_i'$ as index sets, we define an array $\overline{s}$ distinguishing two cases:

\vspace{.5em}

\begin{itemize}
\item If $b = 0$ then $V_C$ accepts: 
\[
\Big\langle \exists \overline{c}, \exists \overline{x}_1, \ldots, \overline{x}_N, \overline{x}_0.  \ldots 
 \lnot \bigvee\limits_{\beta \in \{i_1, \ldots, i_N\}} \bigwedge\limits_{i = 1}^k \varphi_i(\overline{x}_0, \overline{c})^{\beta(i)},
C \Big\rangle
\]
Let $s_0$ be a satisfying tuple for $\overline{x}_0$. Define: 
\[
\overline{s}(r) =
\begin{cases}
s_\beta & \text{ if } r \in p_\beta' \text{ and } \beta \in \{i_1, \ldots, i_N\} \\
s_0    & \text{otherwise}
\end{cases}
\]

\item If $b = 1$ then $S_i'$ satisfies $\bigcup\limits_{\beta \in \{i_1, \ldots, i_N\}} \bigcap\limits_{i = 1}^k S_i'^{\beta(i)} = I$. Define: 
\[
\overline{s}(r) =
\begin{cases}
s_\beta & \text{ if } r \in p_\beta' \text{ and } \beta \in \{i_1, \ldots, i_N\}
\end{cases}
\]
\end{itemize}

\vspace{.5em}

Then, by construction, $\overline{c}, \overline{s}$ form a solution of:
\[  
    \exists \overline{c}, \overline{x}.
    F(S_1,\ldots,S_k) \land \bigwedge_{i = 1}^k 
       S_i = \{ r \in I \mid \varphi_i(\overline{x}(r), \overline{c}) \}
\]
For each $\beta \in \{i_1, \ldots, i_N\}$: \[
p_{\beta} = \Big\{ r  \in I \Big| \bigwedge\limits_{i = 1}^k \varphi(\overline{s}(r), \overline{c})^{\beta(i)} \Big\} =  p_{\beta'}
\] so the cardinality    conditions are met.

\end{proof}

\section{Combination with the Array Theory}
In this section we show, through a syntactic translation, that the conventional and generalized array 
operations can be expressed in QFBAPAI.
The combinatory array logic fragment of de Moura and Bj\o{}rner \cite{de_moura_generalized_2009} can be presented as a multi-sorted structure: 
\[
\mathcal{A} = \langle A,I,V, \cdot [ \cdot ], \text{store}(\cdot, \cdot, \cdot), \{c_i^v\}, \{f_i^v\}, \{R_i^v\}, \{c_j\}, \{f_j\}, \{R_j\} \rangle
\]
where $\mathcal{V} = \langle V, \{c_i^v\}, \{f_i\}^v, \{R_i^v\}\rangle$ is the structure modelling array elements and $I$ is a non-empty set which parametrizes the read ($\cdot[\cdot]$) and store ($store(\cdot, \cdot, \cdot)$) operations. Finally, $\Pi = \mathcal{V}^I = \langle A, \{c_j\}, \{f_j\}, \{R_j\} \rangle$ is the power structure with base $\mathcal{V}$ and index set $I$. Note that, according to the definition of a power structure, there is a one to one correspondence between the symbols of the component language and those of the array language. We use the superscript $v$ to distinguish between value symbols and power structure symbols. 
The read and store operations use a mixture of sorts. The read operation corresponds to a parametrized version of the canonical projection homomorphism of product structures \cite{hodges_model_1993}. It is interpreted as:
\begin{alignat*}{3}
  \cdot[\cdot]: & A \times I & \longrightarrow & V  \\
  & (a,i) & \longmapsto & a(i)
\end{alignat*}
On the other hand, the store
operation lacks a canonical counterpart in model theory. It is to be interpreted as the function: 
\begin{alignat*}{3}
  store: & A \times I \times V & \longrightarrow & A  \\
  & (a,i,v) & \longmapsto & store(a,i,v)
\end{alignat*}
where: 
\[
store(a,i,v)(j) =
\begin{cases}
a(j) & \text{if } j \neq i \\
v    & \text{if } j = i
\end{cases}
\]  
The goal of this section is to give a satisfiability preserving translation from CAL to QFBAPAI in such a way that the size of the transformed formula is bounded by a polynomial in the size of the original input. Since CAL formulas cannot express equicardinality constraints, $|A| = |B|$, this means that we have increased the expresive power of the fragment while retaining the same complexity bound. The translation is written in terms of a list of basic primitives explained below. The complete translation is shown in figure \ref{fig:translation}.

Since we are dealing with quantifier-free formulas, we map the propositional structure to boolean operations and concentrate in the encoding of non-propositional symbols. These symbols are atomic relations in either the component theory or the array theory. 

\textbf{Relations in the component theory}. An atomic formula of the component theory has the following shape:
\[
R^v(f_i\{a_1[i_1], \ldots, a_n[i_n], c_1, \ldots, c_m \})
\]
Here and in the rest of the section we use the notation $R(f_i\{p_1, \ldots, p_n\})$ for a list of $\text{arity}(R)$ function terms of the form $f_i\{p_1, \ldots,p_n\}$ where $f_i$ is a function symbol using a subset of the parameters in $\{p_1, \ldots, p_n\}$. Both $f_i$ and the parameters $p_i$ must have the same sort as $R$. We use the letter $a$ to denote either an array variable $x$ or a $store$ term and the letter $v$ to denote an element value in contrast to a read term $a[j]$.

\vspace{.5em}

We transform the above constraint using the following rules:

\begin{enumerate}
    \item ABSTRACT READS ($\le$ 1): if there are more than two parameters that use the read function $\cdot[\cdot]$ applied to a variable, we rewrite all occurrences $x_j[i]$ but one into value constants $x_{ji}$. Note that a read from a constant array need not create a new value variable. Instead, we rewrite $c[i]$ as $c^v$. In this case, no further changes are required in later stages.
    \item IMPOSE READS: for each abstracted read $x_j[i]$ add the condition:
    \[
    \{l \in I|x_j(l) = x_{ji}\} \supseteq \{i\}
    \]
    \item ABSTRACT WRITES: rewrite the innermost store operations $store(x,i,v)$ into array variables $x_{iv}$.
    \item IMPOSE WRITES: for each abstracted store $x_{iv}$, we impose the condition:
    \[
    \{l \in I|x_{iv}(l) = v\} \supseteq \{i\} \land 
    \{l \in I|x_{iv}(l) = x(l)\} \supseteq \{i\}^c
    \]
\end{enumerate}

This process is repeated until there is no change in the manipulated formula. In this last case, we have obtained a relation:
\[
R^v(f_i\{x[i], \text{abs}_1, \ldots, \text{abs}_k, c_1,\ldots, c_m\})
\]
where $\text{abs}_j$ are the newly introduced array or value variables. We then perform one last step:

\begin{enumerate}[resume]
\item IMPOSE VALUE CONSTRAINT: add the constraint:
\[
\{l \in I|R^v(f_i\{x(l), \text{abs}_1, \ldots, \text{abs}_k, c_1,\ldots, c_m\})\} \supseteq \{i\}
\]
\end{enumerate}

\begin{figure*}[ht!]
\fbox{\parbox{.95\textwidth}{
Given a formula $\psi$ of CAL in negation normal form:

\begin{enumerate}
\setlength\itemsep{1em}
\item Rewrite $\land \mapsto \cap, \lor \mapsto \cup$ and $\lnot \mapsto \cdot^c$.
     
\item Consider the following auxiliary procedure $P$ receiving one bit $b$ as parameter.

\vspace{.5em}

Repeat until no more constraints are added:

\vspace{.5em}

\begin{enumerate}
    \setlength\itemsep{.25em}
    \item $\begin{array}{ccc}
    \text{If } \, b = 0 & \text{ then } & \text{ABSTRACT READS} (= 0) \\
    & \text{else} & \;\; \text{ABSTRACT READS} (<= 1).
    \end{array}$

    \item IMPOSE READS
    \item ABSTRACT WRITES
    \item IMPOSE WRITES
\end{enumerate}

\item For each relation in the array theory call $P$ with $b = 0$.

\item For each relation in the component theory call $P$ with $b = 1$.
\end{enumerate}
}}
\caption{Translation scheme from CAL to QFBAPAI.}
\label{fig:translation}
\end{figure*}

\textbf{Relations in the power structure theory.} An atomic formula of the product theory has the shape:
\[
R(f_i\{a_1, \ldots, a_n, c_1, \ldots, c_m\})
\]
where $c_1, \ldots, c_m$ are to be interpreted as constants of the product. We repeat a variation of the steps 1-4 where ABSTRACT READS ($\le$ 1) is changed into ABSTRACT READS (= 0). The only difference between the two is that the latter removes all reads. The result of this operation is a relation:
\[
R(f_i\{x_1,\ldots,x_s,\text{abs}_1, \ldots, \text{abs}_k, c_1, \ldots, c_m\})
\]
where $\text{abs}_j$ are the newly introduced array variables. We cannot have value variables since in this case value expressions are not top-level.

\vspace{.5em}

In this case, we do the following as a last step:

\begin{enumerate}
  \setcounter{enumi}{4}
  \item IMPOSE ARRAY CONSTRAINT: add the constraint:
  \[
  \{l \in I|R(f_i\{x_1(l), \ldots, x_s(l), \text{abs}_1(l), \ldots, \text{abs}_k(l), c_1, \ldots, c_m\})\} = I
  \]
\end{enumerate}

\textbf{Satisfiability preservation and size of the transformed formula}. It is clear that each transformation step yields an equisatisfiable formula. In particular, this ensures that the order of introduction of new variables does not matter. Even if the transformed formula may contain duplicates, the existence of a solution is equivalent in both formulas.

Regarding the size of the transformed formula, we observe that during the analysis of a relation we create as many variables as the size of such relation. Thus, the number of variables created is at most linear in the size of the formula. This means that the total number of variables and constants that are either present in the original formula or created by the algorithm, $C$, is in $O(|\psi|)$.

The creation of each variable implies the creation of at most three restrictions: this happens in the IMPOSE WRITES case, where the third restriction specifies that the size of $\{i\}$ is one. Each restriction uses at most two variables, so we can encode it using $O(log_2(|\psi|))$ space. Thus, to encode all the added restrictions we need $O(|\psi| \log_2(|\psi|))$ space.

Each relation generates an additional constraint, which may use all the set of $C$ variables. So we may need up to $O(|\psi| \cdot \log_2(|\psi|))$ to encode the constraint. Since there are $O(|\psi|)$ relations, we need $O(|\psi|^2 \log_2(|\psi|))$ space to encode them.

Overall, the size increase is in $O(|\psi|^2 \log_2(|\psi|)$, as desired to preserve NP complexity.

\section{Related Work}
Our work is related to a long tradition of decision procedures for the theories of arrays \cite{mccarthy_towards_1993, stump_decision_2001, bradley_whats_2006, daca_array_2016, habermehl_what_2008, ghilardi_decision_2007}. Our direct inspiration is combinatory array logic \cite{de_moura_generalized_2009}. We have extended this fragment with cardinality constraints.

In our study, we have given priority to those procedures that decide satisfiability within the NP complexity class. From these, \cite{alberti_cardinality_2017} and \cite{daca_array_2016} are the more closely related since they also address counting properties. The main difference with these works is that our index theory is arbitrary and that the element theory is any one in NP. This gives access to a greater degree of compositionality. For instance, we can profit of the properties of QFBAPA to handle infinite cardinalities in the index theory \cite{kuncak_ordered_2010}. On the other hand, the work of \cite{daca_array_2016} allows for  a great expressivity, achieving NP complexity on particular fragments, but it is PSPACE-complete in the general case.

Other influential works in the theory of integer arrays include \cite{bradley_whats_2006} and \cite{habermehl_what_2008}. \cite{bradley_whats_2006} treats a fragment capable of expressing ordering conditions and Presburger restrictions on the indices. \cite{habermehl_what_2008} complements the work above based on automata considerations. In both cases, the complexity of the satisfiability problem for the full fragment remains, to our knowledge, open. Parametric theories of arrays include \cite{stump_decision_2001, de_moura_generalized_2009} and \cite{alberti_decision_2015}. However, the line of work in \cite{alberti_decision_2015} as consolidated in the doctoral thesis \cite{alberti_smt-based_2015}, only shows decidability and NEXPTIME completeness on particular instances. None of \cite{stump_decision_2001, bradley_whats_2006, habermehl_what_2008, de_moura_generalized_2009, alberti_decision_2015, alberti_smt-based_2015} treat cardinality constraints. 

\section{Conclusion and Future Work}
We have identified the model theoretic structure behind a state of the art fragment of the theory of arrays. We have given self-contained proofs of complexity which shed light on the underlying constraints that the fragment addresses. This has allowed to generalize the fragment to encode arbitrary cardinality constraints. Our work also shows that the set variables of BAPA can be interpreted to encode useful restrictions.

As future work, we plan to build on the efforts in \cite{de_moura_generalized_2009}, to provide an efficient implementation of the fragment. We would also like to perform a cross-fertilization with other fragments of the theory of arrays providing counting capabilities, while exploring the interactions between their seemingly different foundations.

%
%
%

\bibliographystyle{splncs04}
{\raggedright 
\bibliography{references}
}
\end{document}